
\documentclass[12pt,onecolumn,journal]{IEEEtran}
\usepackage{citesort}
\usepackage[dvips]{graphicx}
\usepackage[cmex10]{amsmath}
\usepackage{algorithm}
\usepackage{algpseudocode}
\usepackage{array}
\usepackage[tight,footnotesize]{subfigure}
\usepackage{setspace}
\usepackage{enumerate} 
\usepackage{url}

\usepackage{amsfonts, amssymb, amsthm, mathrsfs}
\usepackage{setspace}
\usepackage{Tabbing}
\usepackage{fancyhdr}
\usepackage{lastpage}
\usepackage{extramarks}
\usepackage{chngpage}
\usepackage{soul,color}
\usepackage{indentfirst}
\usepackage{float,wrapfig}
\usepackage{times}
\usepackage{multirow}
\usepackage{bm}


\doublespacing

\newtheorem{theo}{Theorem}
\newtheorem{defi}{Definition}
\newtheorem{assu}{Assumption}
\newtheorem{prop}{Proposition}

\newcolumntype{I}{!{\vrule width 1.5pt}}
\newlength\savedwidth

\begin{document}

\title{User Association for Load Balancing in Heterogeneous Cellular Networks}

\author{Qiaoyang Ye,
        Beiyu Rong,
        Yudong Chen,
        Mazin Al-Shalash,
        Constantine Caramanis
        and Jeffrey G. Andrews
\thanks{Q. Ye, B. Rong, Y. Chen, J. G. Andrews and C. Caramanis are with WNCG, The University of Texas at Austin, USA, M. Shalash is with Huawei Technologies. Email: 
qye@utexas.edu, byrong@austin.utexas.edu, ydchen@utexas.edu,  mshalash@huawei.com, \{cmcaram, jandrews\}@ece.utexas.edu. This research has been supported by the National Science Foundation CIF-1016649 and Huawei. Manuscript last revised: \today}}

\maketitle
\begin{abstract}
For small cell technology to significantly increase the capacity of tower-based cellular networks, mobile users will need to be actively pushed onto the more lightly loaded tiers (corresponding to, e.g., pico and femtocells), even if they offer a lower instantaneous SINR than the macrocell base station (BS).  Optimizing a function of the long-term rates for each user requires (in general) a massive utility maximization problem over all the SINRs and BS loads.  On the other hand, an actual implementation will likely resort to a simple biasing approach where a BS in tier $j$ is treated as having its SINR multiplied by a factor $A_j \geq 1$, which makes it appear more attractive than the heavily-loaded macrocell.  This paper bridges the gap between these approaches through several physical relaxations of the network-wide association problem, whose solution is NP hard.  We provide a low-complexity distributed algorithm that converges to a near-optimal solution with a theoretical performance guarantee, and we observe that simple per-tier biasing loses surprisingly little, if the bias values $A_j$ are chosen carefully.   Numerical results show a large (3.5x) throughput gain for cell-edge users and a 2x rate gain for median users relative to a maximizing received power association.
\end{abstract}


\IEEEpeerreviewmaketitle

\section{Introduction}
To meet surging traffic demands, cellular networks are trending strongly towards increasing heterogeneity, especially through proliferation of small BSs, e.g., picocells and femtocells, which differ primarily in terms of maximum transmit power, physical size, ease-of-deployment and cost \cite{DhiAnd12,Lag97,SalRus87,SydTao09,WuMur04,AndCla12}. Heterogeneous networks (HetNets) enable a more flexible, targeted and economical deployment of new infrastructure versus tower-mounted macro-only systems, which are very expensive to deploy and maintain \cite{DamMon11}.  Even with a targeted deployment where these small BSs are placed in high-traffic zones, most users will still receive the strongest downlink signal from the tower-mounted macrocell BS.  In order to make the most of the new low-power infrastructure, mobile users should be actively ``pushed'' onto the small BSs, which will often be lightly loaded and so can provide a higher rate over time by offering the mobile many more resource blocks than the macrocell.  Similarly, a more balanced user association reduces the load on the macrocell, allowing it to better serve its remaining users.  This paper investigates optimal and near-optimal solutions of this cell association problem, particularly those with simple requirements for coordination and side information.

\subsection{Related Work}
Most prior work on load balancing schemes applies mostly to macrocell-only networks.  HetNets are much more sensitive to the cell association policy because of the massive disparities in cell sizes.  These unequal cell sizes result in very unequal loads in a max-SINR cell association, assuming a relatively uniform mobile user distribution.  That is, if users simply associate with the strongest BS, the difference in load in macrocell networks is constrained since the cells all have roughly the same coverage area.  But in HetNets, the opposite is true, making the problem considerably more complex, and the potential gains from load-aware associations larger. 

The existing work on cell association can be broadly classified into two groups:
\begin{enumerate}
\item Strategies based on \textit{channel borrowing} from lightly-loaded cells, such as hybrid channel assignment (HCA) \cite{KahGeo78}, channel borrowing without locking (CBWL) \cite{JiaRap94}, load balancing with selective borrowing (LBSB) \cite{DasSen97,DasSen98}, etc;
\item Strategies based on \textit{traffic transfer} to lightly-loaded cells, such as directed retry \cite{Ekl86}, mobile-assisted call admission algorithms (MACA) \cite{WuMuk00}, hierarchical macrocell overlay systems \cite{CavAgr05,YanTon04}, cell breathing techniques \cite{DasVis03,BejHan09}, and biasing methods in HetNets \cite{DamMon11}.
\end{enumerate}
The approach in this paper is based on traffic transfer.  There have been many efforts in the literature toward traffic transfer strategies in macro-only cellular networks. The so-called ``cell breathing'' technique \cite{DasVis03,BejHan09} dynamically changes (contracts or expands) the coverage area depending on the load situation (over-loaded or under-loaded) of the cells by adjusting the transmit power. Sang \textit{et al.} \cite{SanWan08} proposed an integrated framework consisting of MAC-layer cell breathing and load-aware handover/cell-site selection. Cell breathing aims to balance the load among neighboring macrocells, while in HetNets we additionally need to balance the load among different tiers.

A popular approach in conventional networks, related to the direction we propose, is to achieve load balancing by changing the objective function to be concave. Indeed, there is considerable work investigating different utility functions, such as network-wide proportional fairness \cite{BuLi06}, network-wide max-min fairness \cite{BejHan07}, maximization of network-wide aggregate utility by partial frequency reuse and load balancing \cite{SonCho09}, and $\alpha$-optimal user association \cite{KimVec11}. We adopt the logarithmic function as the utility function, which is similar to proportional fairness, and achieves a desirable tradeoff between opportunism and fair allocation across users, by saturating the reward for providing more rate to users which already have a high rate.  

In HetNets, there are a few recent investigations of the cell association problem. A joint optimization of channel selection, user association and power control in HetNets is considered in \cite{CheBac11}, aiming to minimize the potential delay, which is related to the sum of the inverse of the per-user SINRs, where the SINR takes into account the load when computing the interference. Corroy \textit{et al.} \cite{CorFal12} propose a dynamic cell association to maximize sum rate as well as a heuristic cell \textit{range expansion} algorithm for load balancing. Cell range expansion is an effective method to balance the load among high and low power BSs, which is enabled through cell biasing~\cite{DamMon11,JoSan11}. It is achieved by performing user association based on the biased measured signal, which leads to better load balancing, but the improvement of load balancing may not overwhelm the degradation in SINR that certain users suffer. Therefore, how to design the biasing factor is an important open problem. 

\subsection{Contributions and Organization}
We present a load-aware cell association method and distributed algorithm for downlink HetNets, that results in the following main contributions.

First, in Section~\ref{sec:formulation}, we undertake an optimization theoretic approach to the load-balancing problem, where we consider cell association and resource allocation jointly. We decouple the joint general utility maximization problem by assuming (optimistically) that users can be associated with more than one BS. This approach provides an upper bound on achievable network utility which can serve as a benchmark. However, in real system, it is much more difficult to implement multi-BS association than single-BS association. Therefore, we formulate a logarithmic utility maximization problem for single-BS association, and show that equal resource allocation is actually optimal, over a sufficiently large time window. This observation allows the coupled problem in single-BS association to reduce to the cell association problem with equal resource allocation, which along with the fractional association assumption converts the previously intractable combinatorial problem into a convex optimization problem. 

In Section \ref{sec:algo}, we exploit the convexity of the problem to develop a distributed algorithm via dual decomposition that converges towards the optimal solution with a guarantee on the maximum gap from optimality. This provides a feasible, efficient and low-overhead algorithm for implementation in HetNets. 

In Section \ref{sec:insights}, we leverage our provably optimal solutions to ask a basic question: how much of the performance gain can a simple policy based on \emph{a priori} bias factors achieve? Our results show that this simple approach gets surprisingly close to the gains of the load-aware utility maximization.  The gains from this approach are shown to be very large for most users in the system, with rate gains ranging from 2-3.5x for the bottom half of users.  To put this in context, this is a gain on par with what would otherwise be achieved by a doubling or tripling the amount of spectrum for a given service provider.  Cell interior users experience little to no rate gain (or a small loss), but this has little relevance in practice since such users are already well-served.

\section{System Model} \label{sec:model}
In downlink (DL) cellular networks, the default association scheme is max-SINR, which indeed maximizes the probability of coverage, i.e., $\mathbb{P}(\text{SINR}>\beta)$, where $\beta$ is a target SINR (or equivalently minimizes the probability of outage, i.e., $\mathbb{P}(\text{SINR}\leq\beta)$). In conventional networks, the default association scheme in uplink (UL) is typically the same as the association in DL, since the coverage areas are almost the same among different macrocells. However, this is not the case in HetNets, since the BSs of different tiers have such widely divergent transmit powers. Rather, the DL coverage area of macro BSs is much larger than that of smaller BSs. If we adopt the same association in UL, the cell-edge macro-users will cause great interference to nearby users, especially for users which are associated to nearby small cells. Furthermore, the key metric for performance is their service \emph{rate}, not \emph{SINR}.  The instantaneous rate is of course directly related to SINR (e.g., $\log_2(1 + \textrm{SINR})$),  but the overall served rate is then multiplied by the fraction of resources that user gets.  Hence, heavily-loaded cells provide lower rate over time, even if they provide a higher SINR. Load balancing problem is very important in both DL and UL HetNets.

%

In this paper, we focus on DL cell association. UL could likely be considered through a similar approach, but is complicated by the use of UL power control, which changes the interference depending on the association.  Here, we assume that all BSs have full buffers and slowly changing (or constant) transmit power, which means that BSs are always sending at a fixed transmit power over the association time scale, and thus the interference power is a constant which is independent of the specific association.

A downlink HetNet consisting of $K$-tiers of BSs is illustrated in Fig. \ref{fig:hetnet} with $K=3$. Each tier models a particular type of BS: for example, tier 1 consists of traditional macrocells, and tier 2 and tier~3 could be interpreted as being comprised of picocells and femtocells, respectively. Picocells transmit at a lower power with a higher deployed density than the macrocells, while the femtocells, or the home BSs, may eventually be deployed very densely but have a very small transmit power.

\begin{figure}[!hbp]
\centering
\includegraphics[scale=0.6]{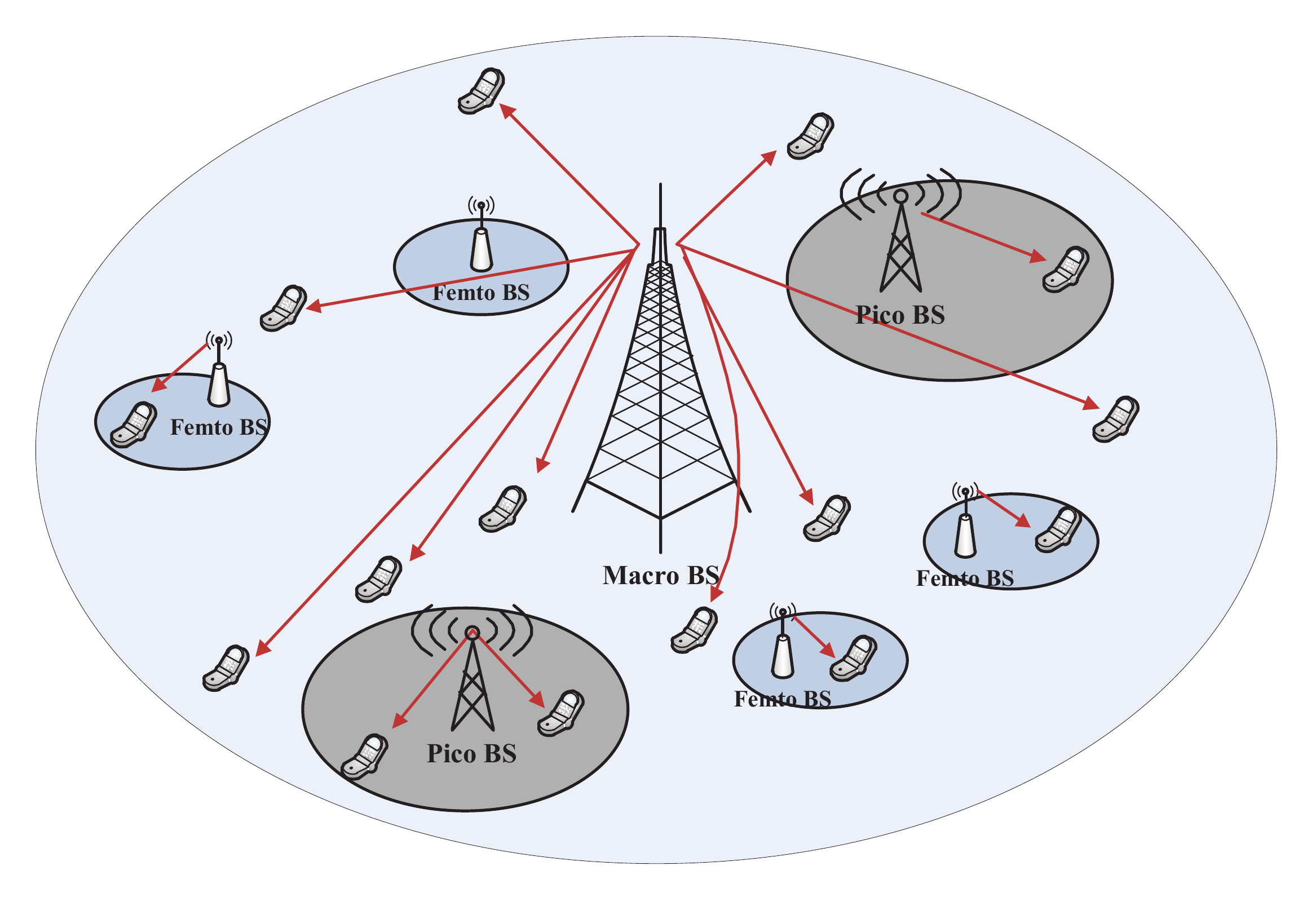}
\caption{Illustration of a three-tier heterogeneous cellular network. Only a single macro-cell is shown for simplicity.}
\label{fig:hetnet}
\end{figure}

We denote by $\mathcal{B}$ the set of all BSs, and $\mathcal{U}$ the set of all users. During the connection period, we denote by $c_{ij}$ the achievable rate, where generally, $c_{ij}$ is a logarithmic function of SINR. 
\begin{eqnarray}
c_{ij}=f(\textrm{SINR}_{ij})=f(\frac{P_j g_{ij}}{\sum\limits_{k\in \mathcal{B}, k\neq j} P_k g_{ik}+\sigma^2}),
\end{eqnarray}
where $P_j$ is the transmit power of BS $j$, $g_{ij}$ denotes the channel gain between user $i$ and BS~$j$, which in general includes path loss, shadowing and antenna gain, and $\sigma^2$ denotes the noise power level. The association is carried out in a large time scale compared to the change of channel. The SINR for association is averaged over the association time and thus it is a constant regardless of the dynamics of channels. As for resource allocation, we assume that resource allocation is carried out well during the channel coherence time, and thus channel can be regarded as static during each resource allocation period. This model is applicable for low mobility environment. We leave the stochastic channel analysis as our future work. Note that though this paper is focus on single-carrier system, our model can be extended to multi-carrier system in a straightforward manner (i.e., let $c_{ij}$ be the average rate over different bands).

Since each BS generally serves more than one user, users of the same BS need to share resources such as time and frequency slots. The long-term service rate experienced by a user thus depends on the load of the BS and will therefore be only a fraction of the value $c_{ij}$ (unless BS $j$ exclusively serves user $i$). We assume that users will keep transmitting during an association time scale (i.e., users have full buffers), so the load on a BS is directly proportional to the total number of users associated with it.

Moreover, the overall service rate also depends on the resource allocation method of the BSs. In principle, any allocation method or service discipline with which the resource allocation is related to both the load of BSs and the rate of each user can be used. Therefore, the achievable overall rate of user $i$ associated with BS $j$ depends on $c_{ij}, c_{qj}$, and how BS $j$ distributes its resources among its associated users, for all $q\in\mathcal{U}\setminus{i}$.  We focus on finding an optimal resource allocation and optimal cell associations  which maximize the utility. 
During the connection between the BS $j$ and user $i$, denoting the fraction of resource BS serves user $i$ by $y_{ij}$, we can define the overall long term rate as follows.
\begin{defi}
If user $i$ is associated with BS $j$, the overall long term rate is
\vspace{-0.4cm}
\begin{equation}
R_{ij}=y_{ij}c_{ij},\vspace{-0.5cm}
\end{equation}
where $\sum_i y_{ij}=1, \ \forall j$. We denote the total overall rate of user $i$ as $R_i$, where $R_i=\sum_j R_{ij}$.
\end{defi}
In the following, we investigate a utility maximization problem for the overall rate $R_{i}$ to find the optimal association and resource allocation.

\section{Problem Formulation}\label{sec:formulation}
Taking a utility function perspective, we assume user $i$ obtains utility $U_{i}(R_{i})$ when receiving rate is $R_{i}$, where the function $U_{i}(\cdot)$ is a continuously differentiable, monotonically increasing, and strictly concave utility function~\cite{StaWic09}.
\subsection{General Utility Maximization: Unique Association}
 We formulate an optimization problem which involves finding the indicators $\left\lbrace x_{ij}\right\rbrace$ corresponding to the association (i.e., $x_{ij}=1$ when user $i$ is associated with BS $j$, otherwise $x_{ij}=0$) and $\{y_{ij}\}$ corresponding to the resource allocation that maximizes the aggregate utility function:
\vspace{-0.1cm}
\begin{equation}\label{eq:general}
\begin{aligned}
 \max\limits_{x,y} & \quad \sum\limits_{i\in \mathcal{U}} U_{i}(R_i)=\sum\limits_{i\in \mathcal{U}} U_{i}(\sum\limits_{j\in\mathcal{B}} y_{ij}c_{ij})\\
 \textrm{s.t.} & \quad \sum\limits_{j\in\mathcal{B}} x_{ij}=1, \quad \forall i\in \mathcal{U} \\
 &\quad \sum\limits_{i\in\mathcal{U}} y_{ij}\leq1, \quad\forall j\in\mathcal{B}\\
 & \quad  0\leq y_{ij}\leq x_{ij},\  x_{ij} \in \{0,1\} \quad \forall i\in \mathcal{U}, \forall j\in \mathcal{B}.
\end{aligned}
\end{equation}

\subsection{General Utility Maximization: Allowing Joint Association}
\label{sec:III-B}
The indicator variable $x_{ij}$ enforces unique association, which is combinatorial. Moreover, the cell association has to be considered jointly with resource allocation, because resource allocation depends on the association and user association depends on the achievable resource for each user. Therefore, the resulting problem is difficult to solve. While allowing a user to be served by multiple BSs may require more overhead to implement, and hence perhaps may not be viable in practice, it provides an upper bound on the network performance. In this section, we make the following assumption:
\begin{assu}\label{as:more}
We assume that users can be associated with more than one BS at the same time. 
\end{assu}

Under this assumption,  the constraint $\sum_j x_{ij}=1$ can be eliminated, and hence there is no need for  $x_{ij}$ as additional indicators for cell association. The resource allocation variable $y_{ij}\in [0,1]$ indicates the association, i.e., user $i$ is associated with BS $j$ when $y_{ij}>0$, otherwise they are not connected.

Therefore, we focus only on investigation of how the resource should be allocated to different users with different rate $c_{ij}$ so as to maximize the utility, instead of considering in conjunction with cell association. 

We formulate the \textit{joint association} problem as follows:
\begin{equation}\label{eq:jointop}
\begin{aligned}
     \max_y &\quad \sum_{i\in\mathcal{U}} U_i(\sum_j y_{ij}c_{ij}) \\
     \textrm{s.t.} & \quad    \sum_{i\in\mathcal{U}}y_{ij}\leq 1,\quad \forall j\in\mathcal{B}\\
     & \quad 0\leq y_{ij} \leq 1, \forall i\in \mathcal{U} ,\forall j\in \mathcal{B}.
\end{aligned}            
\end{equation}
Note that this joint association scheme focuses on how to allocate resource for each BS, rather than how to associate users. In the following sections, we show that with some specific utility functions (e.g., logarithmic utility),~$y_{ij}$ can be directly found without Assumption \ref{as:more} and thus there is no need to decouple $x_{ij}$ and $y_{ij}$ as in this optimization. However, problem (\ref{eq:jointop}) provides an ultimate limit on achievable network performance for general utility maximizations. Interestingly, our simulation results show that the bound is quite tight in logarithmic utility maximization.

\subsection{Logarithmic Utility Formulation}
Using linear utility functions for throughput maximization results in a trivial solution, where each BS serves only its strongest user. While throughput-optimal, this is not a satisfactory solution for many reasons. Instead, we seek a utility for that naturally achieves load balancing, and some level of fairness among the users. To accomplish this, we use a logarithmic utility function. The resulting objective function with logarithmic utility is
\begin{equation}\label{eq:sumlog formulation}
U_{i}(R_{i})= \log\left(\sum_j y_{ij}c_{ij}\right). 
\end{equation}
This logarithm is concave, and hence has diminishing returns. This property encourage load balancing. This is consistent with the resource allocation philosophy in real systems, where allocating more resources for a well-served user is considered low priority, whereas providing more resources to users with low rates (e.g., in the linear region of the logarithmic function) is considered desirable. Logarithmic function in particular is a very common choice of utility function. Therefore, in the remainder of this paper, we use a logarithmic utility function.



\subsection{Analysis of Optimized Resource Allocation}
For general utility functions, we proposed one possible tractable model for the joint cell association and resource allocation problem in Set. \ref{sec:III-B}, which allows users to be served by multiple BSs. In practice, this is much more difficult to implement than single-BS association. Therefore, we consider it as a benchmark in this paper, providing an upper bound on network performance. In the remainder of this paper, we focus on the log utility function in single-BS association. 

In single-BS association, the objective function of ($\ref{eq:general}$) becomes 
\begin{equation}
\sum\limits_{j\in \mathcal{B}} \sum_{i\in\{l| x_{lj}=1\}} U_{i}( y_{ij}c_{ij}).
\end{equation}

Then, we conduct the resource allocation analysis on a typical BS $j$ and the users associated with that BS. The utility maximization problem for the users associated with BS $j$ is
\begin{equation}\label{eq:resourceallo}
\begin{aligned}
     \max_y &\quad \sum_{i\in\{l| x_{lj}=1\}}  \log ( y_{ij}c_{ij}) \\
     \textrm{s.t.} &\quad \sum\limits_{i\in\mathcal{U}} y_{ij} \leq 1, \\
&\quad 0\leq y_{ij}\leq 1 \quad \forall i\in \mathcal{U}.
\end{aligned}            
\end{equation}
\begin{defi}
We define the effective load of BS $K_j$ as the number of users associated with it, i.e., $K_j=\sum\limits_{k\in \mathcal{U}} x_{kj}$,
where $x_{ij}$ is the association indicator.
\end{defi}
The optimization (\ref{eq:resourceallo}) suggests the following proposition.
\begin{prop}\label{prop:equalresource}
The optimal resource allocation is equal allocation, i.e., $y_{ij}=\left.1\middle/{K_j}\right.$.
\end{prop}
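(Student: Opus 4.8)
The plan is to exploit the fact that, once the association $\{x_{ij}\}$ is fixed, the rates $c_{ij}$ entering (\ref{eq:resourceallo}) are constants, so the objective separates additively. Writing $\log(y_{ij}c_{ij}) = \log y_{ij} + \log c_{ij}$ and discarding the constant term $\sum_{i:x_{ij}=1}\log c_{ij}$, problem (\ref{eq:resourceallo}) is equivalent to maximizing $\sum_{i:x_{ij}=1}\log y_{ij}$ subject to $\sum_i y_{ij}\le 1$ and $0\le y_{ij}\le 1$. This is a strictly concave objective (a sum of strictly concave $\log$ terms) over a convex, compact feasible set, so it admits a unique maximizer and the KKT conditions are both necessary and sufficient.

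First I would argue that the capacity constraint is active. Since $\log$ is strictly increasing, if $\sum_i y_{ij}<1$ at some feasible point, increasing any single $y_{ij}$ strictly raises the objective while remaining feasible; hence any optimum satisfies $\sum_{i:x_{ij}=1}y_{ij}=1$. I would also note that the objective forces $y_{ij}>0$ for every associated user (the $\log$ diverges to $-\infty$ otherwise), so none of the nonnegativity constraints bind at the optimum. Next I would form the Lagrangian $L = \sum_i \log y_{ij} - \lambda(\sum_i y_{ij}-1)$ for the active equality constraint and set $\partial L/\partial y_{ij} = 1/y_{ij} - \lambda = 0$, which gives $y_{ij} = 1/\lambda$, the same value for every associated user $i$. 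Substituting into $\sum_{i:x_{ij}=1}y_{ij}=1$ with $K_j$ associated users yields $\lambda = K_j$, and therefore $y_{ij}=1/K_j$.

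Finally I would verify that the upper-bound constraints are inactive, which is immediate since $1/K_j \le 1$ whenever $K_j\ge 1$, so the candidate is feasible and, by the sufficiency of KKT together with strict concavity, is the unique global optimum. I do not expect a genuine obstacle here; the only points requiring care are (i) justifying that the capacity constraint is tight and that interior positivity holds before invoking the stationarity condition, and (ii) appealing to strict concavity to guarantee that the equal-allocation point is the unique global maximizer rather than merely a critical point. As an alternative to the Lagrangian, the same conclusion follows in one line from Jensen's inequality: $\frac{1}{K_j}\sum_i \log y_{ij} \le \log\left(\frac{1}{K_j}\sum_i y_{ij}\right) \le \log(1/K_j)$, with equality throughout if and only if all $y_{ij}$ are equal, again pinning the optimum at $y_{ij}=1/K_j$.
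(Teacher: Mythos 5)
Your proof is correct, but your primary route differs from the paper's. The paper's proof in Appendix~A performs the same decomposition $\log(y_{ij}c_{ij})=\log c_{ij}+\log y_{ij}$ and then finishes purely with the arithmetic--geometric mean inequality: maximizing $\sum_i \log y_{ij}$ is equivalent to maximizing the geometric mean $\sqrt[n]{y_{1j}\cdots y_{N_uj}}$, which is bounded by the arithmetic mean $\frac{1}{N_u}\sum_i y_{ij}\leq \frac{1}{N_u}$, with equality if and only if all $y_{ij}$ are equal. You instead run the full KKT machinery: you argue the capacity constraint is tight, that interior positivity holds (the $\log$ barrier), form the Lagrangian, obtain $y_{ij}=1/\lambda$ from stationarity, and identify $\lambda=K_j$; strict concavity then gives uniqueness. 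Your approach is more systematic and generalizes immediately to weighted logarithms or other strictly concave utilities where no closed-form mean inequality is available, at the cost of more bookkeeping (your care about which constraints bind is necessary and handled correctly, modulo the small point that when $\sum_i y_{ij}<1$ one must increase a coordinate that is strictly below its upper bound, which always exists since not all $y_{ij}$ can equal $1$). The paper's argument is shorter and entirely elementary, but is tied to the specific $\log$ form. Note that your closing Jensen one-liner is, in substance, exactly the paper's proof: AM--GM is Jensen's inequality applied to the concave $\log$, so your ``alternative'' recovers the published argument.
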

\begin{proof}
See Appendix ~\ref{proof:equalresource}.
\end{proof}
In this paper, we assume the channel is static during each resource allocation period. Note that in stochastic setting which takes into account time-varying channels and user mobility, this proposition becomes exactly proportional fair scheduling (i.e., to maximize the log utility in terms of long-term average throughput).

Therefore, using the logarithmic utility function, the resource allocation is quite simple, and is independent of the distribution of SINR, which makes the joint cell association and resource allocation problem tractable. In particular, the optimal allocation is uniform across users served by that BS.

Given this equal resource allocation, the long-term rate for user $i$ from BS $j$ is
\vspace{-0.3cm}
\begin{equation}\label{eq:R}
R_{ij}= \frac{c_{ij}}{K_j},
\end{equation}
\noindent so we can rewrite the optimization (\ref{eq:general}) as
\begin{equation}\label{eq:ingerop}
\begin{aligned}
 \max\limits_x & \quad \sum\limits_{i\in\mathcal{U}} \sum\limits_{j\in\mathcal{B}} x_{ij} \log\left(\frac{c_{ij}} {\sum_k x_{kj}} \right)\\
 \textrm{s.t.} & \quad \sum\limits_{j\in\mathcal{B}} x_{ij}=1, \quad \forall i\in \mathcal{U}, \\
 & \quad x_{ij} \in \{0,1\}, \quad \forall i\in \mathcal{U}, \textrm{ and } \forall j\in \mathcal{B}.
\end{aligned}
\end{equation}

When the network is small, the optimal user association can be found through a brute force search. As an illustrative example, Fig. \ref{fig:largeex} compares the resulting association patterns of max-SINR vs. the proposed load-aware association scheme in (\ref{eq:sumlog formulation}). In Fig. \ref{maxSINRl}, max-SINR associates many users with macro BS $1$, which overloads it; on the other hand, many of the small BSs serve very few users with some even being idle. Fig. \ref{roundingl} shows the load-aware association, which moves traffic off congested macrocells and onto more lightly loaded small cells. Note that admission control carries out a similar task, where new arrival users will be blocked or forced to other lightly loaded BSs when the potential BS is heavily loaded. However, admission control is performed before a connection is established (i.e., only for new users rather than existing users), and thus cannot achieve an optimal association in terms of load balancing. 
\begin{figure}[!hbp]
\centering
\setcounter{subfigure}{0}
\subfigure[max-SINR association]{\label{maxSINRl}\includegraphics[width=8cm, height=6cm]{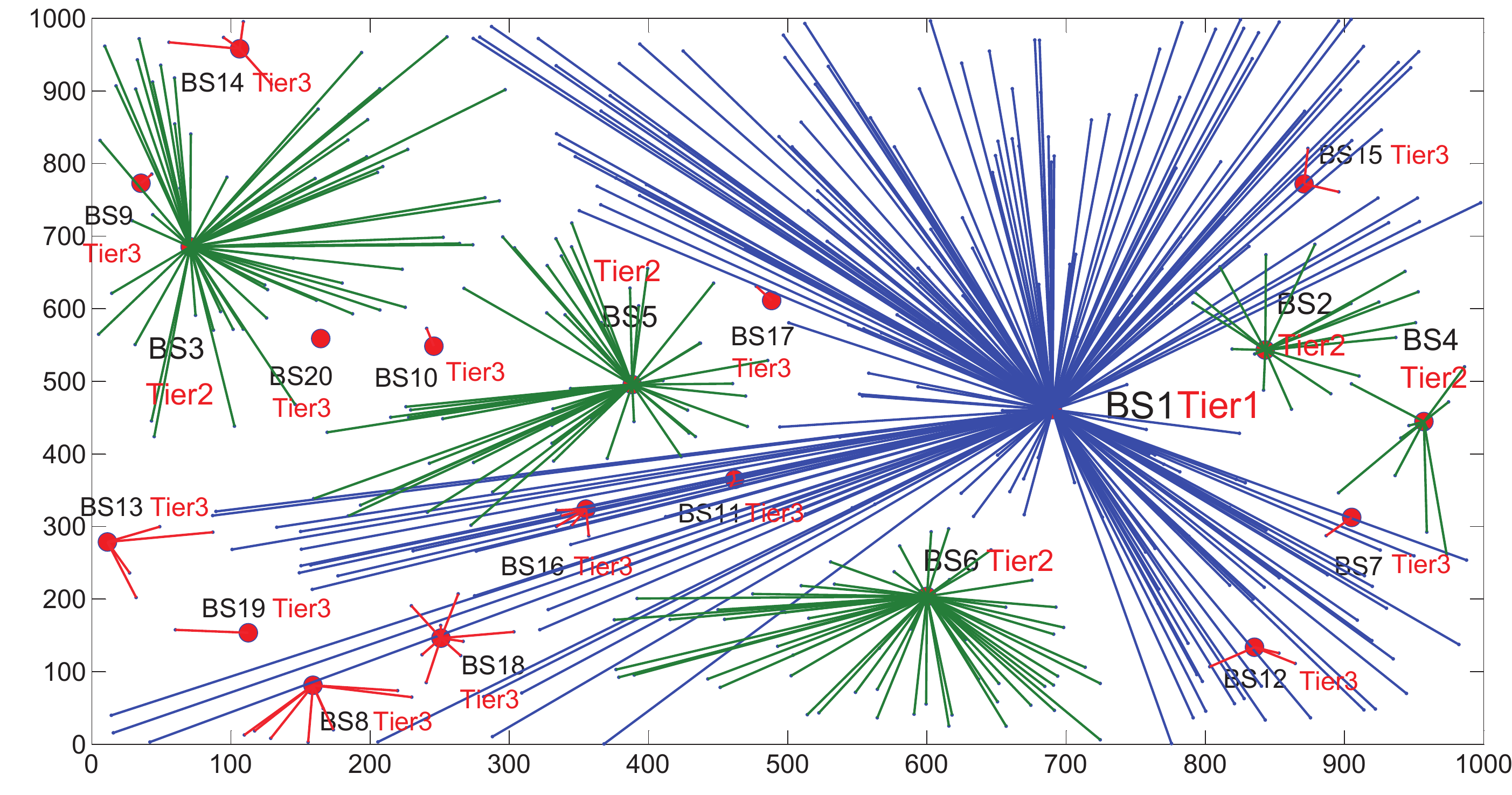}}
 \hspace{0.1in}
\subfigure[Fractional-rounding methods with max-sum-log objective]{\label{roundingl}\includegraphics[width=8cm, height=6cm]{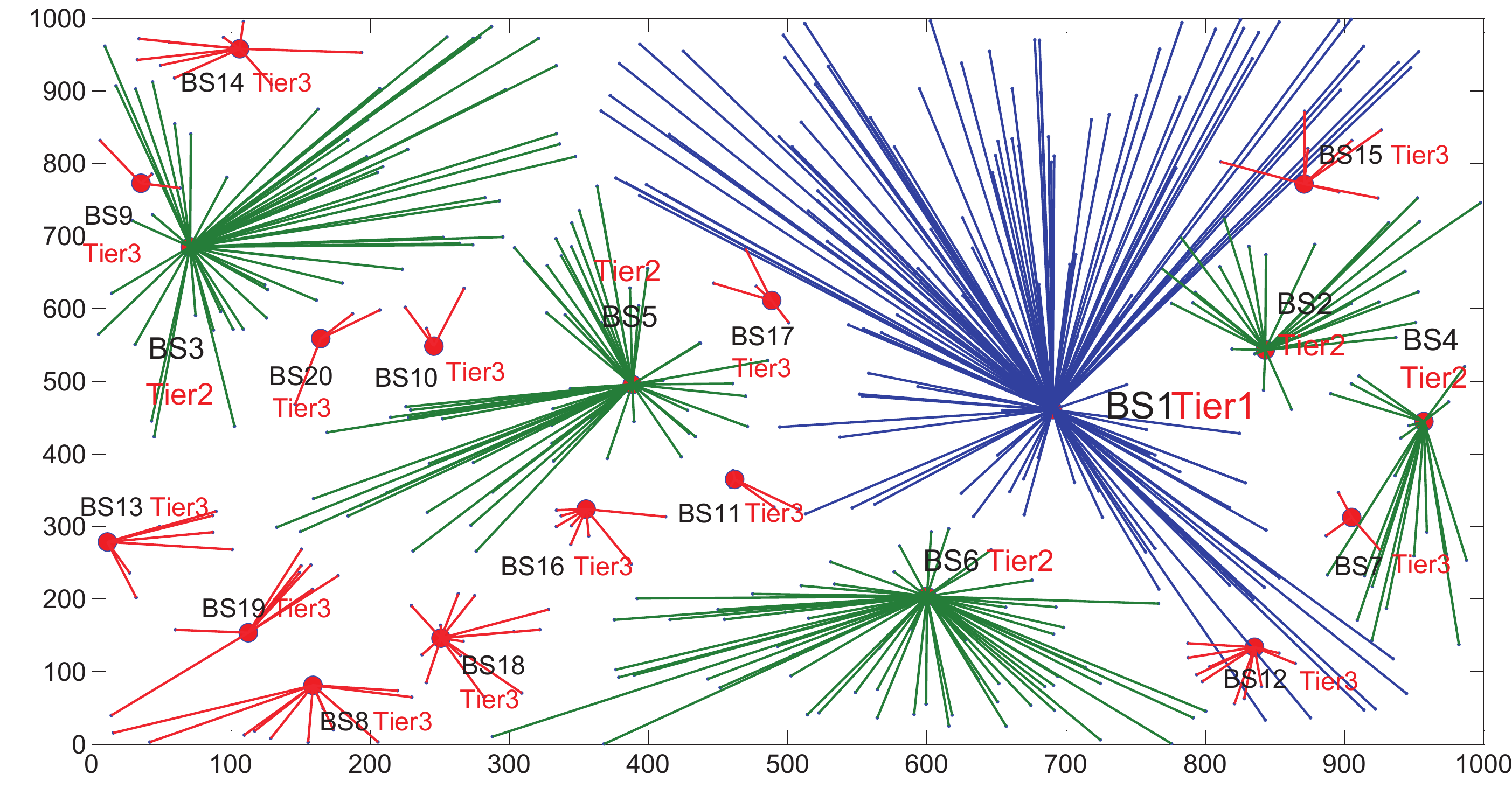}}
\caption{Different associations in HetNets. The load-aware scheme leads to more efficient resource utilization, by handover of load to underutilized small BSs}
\label{fig:largeex}
\end{figure}
\subsection{Relaxation to Fractional User Association}
The above problem is combinatorial due to  the binary variable $x_{ij}$. The complexity of the brute force algorithm is $\Theta((N_B)^{N_U})$, where $N_B$ and $N_U$ denote the number of BSs and number of users, respectively. The computation is essentially impossible for even a modest-sized cellular network. To overcome this, we again invoke Assumption \ref{as:more} to allow users to be associated to more than one BSs, i.e., ``fractional user association'' (FUA). This physical relaxation reduces the complexity which is no longer combinatorial, and upper bounds the special case where each user is associated with just one BS. It is more difficult to implement multiple-BS association than single-BS association in a practical system, and thus we adopt a rounding method to revert to single BS association~(\ref{eq:ingerop}).

Note that the upper bound provided by FUA is different from that provided by joint association with a general utility function: FUA upper bounds the performance of the cell association problem with equal resource allocation, while the joint association problem on Section \ref{sec:III-B} provides an upper bound without any restriction on resource allocation.  Nevertheless, the numerical results in Section \ref{sec:simulation} show that there is almost no loss after rounding, and the upper bound provided by FUA is quite tight.

With FUA, the indicators $x_{ij}$ can take on any real value in~$[0,1]$. The following relaxation of (\ref{eq:general}) is convex:
\begin{equation}\label{eq:convexop}
\begin{aligned}
 \max_x & \quad \sum\limits_{i\in\mathcal{U}} \sum\limits_{j\in\mathcal{B}} x_{ij}\log\left(\frac{c_{ij}}{\sum_{k\in\mathcal{U}} x_{kj}}\right)\\
 \textrm{s.t.} & \quad \sum\limits_{j\in\mathcal{B}} x_{ij}=1, \quad \forall i\in \mathcal{U}, \\
 & \quad 0 \leq x_{ij} \leq 1, \quad \forall i\in \mathcal{U}, \textrm{ and } \forall j\in \mathcal{B}.
\end{aligned}
\end{equation}


%

To directly solve the convex optimization (\ref{eq:convexop}), global network information is necessary, which requires a centralized controller for user association and coordination. In the following section, we propose a distributed algorithm without coordination.

%
%

\section{Primal-Dual Distributed Algorithm}\label{sec:algo}
The centralized functionality for solving the convex optimization problem is usually implemented by a server in the core network for macrocells (e.g., Radio Network Controller (RNC) which carries out resource management in UMTS), by only allowing slow adaptation at relatively long timescales and requirement of coordination among different tiers. Additional issues with centralized mechanisms include excessive computational complexity and low reliability, as any crash on the centralized controller operation will disrupt load balancing. In HetNets, it is usually difficult to coordinate macrocellls and femtocells which are deployed by operators and users respectively. Therefore, a low complexity distributed algorithm without coordination is desirable.

In this section, we propose a distributed algorithm via Lagrangian dual decomposition \cite{LowLap99}. The dual problem of (\ref{eq:convexop}) is decoupled into two sub-problems, which can be solved separately on users' side and BSs' side respectively.

\subsection{Dual Decomposition}
The primal formulation in (\ref{eq:convexop}) can be expressed in an equivalent form by introducing a new set of variables,  the load metric $K_j=\sum\limits_i x_{ij}$.
\vspace{-0.45cm}
\begin{equation}\label{primal}
\begin{aligned}
 \max_x & \quad \sum\limits_i \sum\limits_j x_{ij}\log\left({c_{ij}}\right)-\sum\limits_j K_j \log\left(K_j\right)\\
 \textrm{s.t.} & \quad \sum\limits_j x_{ij}=1, \quad \forall i\in U \\
 & \quad \sum\limits_i x_{ij}=K_j, \quad \forall j \in B\\
 & \quad K_j\leq N_U\\
 & \quad x_{ij}, K_j \geq 0, \quad \forall i\in U, \textrm{ and } \forall j\in B,
\end{aligned}
\end{equation}
where the redundant constraint $K_j \leq N_U$ is added for the analysis of convergence of the following distributed algorithm, which is further explained in the proof of Theorem~\ref{theo2}. 

The only coupling constraint is $\sum_i x_{ij}=K_j$ in problem (\ref{primal}). This motivates us to turn to the Lagrangian dual decomposition method whereby a Lagrange multiplier $\mu$ is introduced to relax the coupled constraint. The dual problem is thus:
\begin{equation}\label{dual}
\textbf{D:} \quad \min\limits_\mu \quad D(\mu)=f_x(\mu)+g_K(\mu),
\end{equation}
where
\begin{equation}\label{sub1}
f(\mu)=\left\lbrace
\begin{aligned}
\max\limits_x & \quad \sum\limits_i\sum\limits_j x_{ij}\left(\log(c_{ij})-\mu_{j}\right)\\
\textrm{s.t.} & \quad \sum\limits_j x_{ij}=1\\
& \quad 0\leq x_{ij} \leq 1
\end{aligned}
\right. 
\end{equation}
\begin{equation}\label{sub2}
\begin{aligned}
g(\mu)=\max\limits_{K\leq N_U} \quad \sum\limits_j K_j\left(\mu_j -\log(K_j)\right).
\end{aligned}
\end{equation}

When the optimal value of (\ref{primal}) and (\ref{dual}) is the same, we say that strong duality holds. Slater's condition is one of the simple constraint qualifications under which strong duality holds. The constraints in (\ref{primal}) are all linear equalities and inequalities, and thus the Slater condition reduces to feasibility~\cite{BoyVan04}. Therefore, the primal problem (\ref{primal}) can be equivalently solved by the dual problem (\ref{dual}). Denoting $x_{ij}(\mu)$ as the maximizer of the first sub-problem~(\ref{sub1}) and $K_j(\mu)$ as the maximizer of the second sub-problem~(\ref{sub2}). There exits a dual optimal~$\mu^*$ such that $x(\mu^*)$ and $K(\mu^*)$ are the primal optimal. Therefore, given the dual optimal~$\mu^*$, we can get the primal optimal solution by solving the decoupled inner maximization problems (\ref{sub1}) and (\ref{sub2}) separately without coordination among the users and BSs.

\subsection{The Distributed Algorithm}
The outer problem is solved by the gradient projection method \cite{BerTsi89}, where the Lagrange multiplier~$\mu$ is updated in the opposite direction to the gradient $\nabla D(\mu)$. Evaluating the gradient of the dual objective function requires us to solve the inner maximization problem, which has been decomposed into two sub-problems~$f$ and $g$. These two sub-problems can be solved in a distributed manner.  The $t$th iteration of gradient projection algorithm is given as follows: 
\subsubsection{\textbf{User's Algorithm}}
\begin{enumerate}[i)]
\item
Each user measures the SINR by using pilot signals from all BSs, and receives the value of~$\mu_j$ broadcast by each BS at the beginning of the iteration.
\item
User $i$ determines BS $j^*$ which satisfies the follows:
\vspace{-0.4cm}
\begin{equation}\label{xij}
j^*=\arg\max\limits_j \left(\log(c_{ij})-\mu_j(t)\right).\vspace{-0.2cm}
\end{equation}
If there are multiple maximizers, user will choose any one of them.
\end{enumerate}

\subsubsection{\textbf{BSs' Algorithm}} \ 

Each BS updates the new value of $K_j$ and $\mu_j$ in two steps and announces the new multiplier $\mu_j$ to the system.
\begin{enumerate}[i)]
\item
To obtain the maximizer of problem (\ref{sub2}), we set its gradient to be 0 with the constraint $K_j \leq N_U$, i.e.,
\vspace{-0.1cm}
\begin{equation}
K_j(t+1)=\min\{N_U, e^{(\mu_j(t)-1)}\}.\vspace{-0.3cm}
\end{equation}
\item
The new value of the Lagrange multiplier is updated by
\vspace{-0.1cm}
\begin{equation}\label{muadjust}
\mu_j(t+1)=\mu_j(t)-\delta(t)\cdot\left(K_j(t)-\sum\limits_i x_{ij}(t)\right),
\end{equation}
where $\delta(t)>0$ is a dynamically chosen stepsize sequence based on some suitable estimates.
\end{enumerate}

There is a nice interpretation of $\mu$. The multiplier $\mu$ works as a message between users and BSs in the system. In fact, it can be interpreted as the price of the BSs determined by the load situation, which can be either positive or negative. If we interpret $\sum\limits_i x_{ij}$ as the serving demand for BS $j$ and~$K_j$ as the service the BS $j$ can provide, then $\mu_j$ is the bridge between demand and supply, and Eq. (\ref{muadjust}) is indeed consistent with the \textit{law of supply and demand}: if the demand $\sum\limits_i x_{ij}$ for BS $j$ exceeds the supply $K_j$, the price $\mu_j$ will go up; otherwise, the price $\mu_j$ will decrease. Thus, when the BS $j$ is over-loaded, it will increase its price $\mu_j$ and fewer users will associate with it, while other under-loaded BSs will decrease the price so as to attract more users. Moreover, the function of $\mu$ (\ref{xij}) in distributed algorithm motivates rate bias scheme, which is discussed in Section~\ref{sec:insights}.

Given $x_{ij}(\mu)$ and $K_j(\mu)$, the adjustment (\ref{muadjust}) can be made completely distributed among BSs based on only local information. At each iteration, the complexity of the distributed algorithm is $\mathcal{O}(N_BN_U)$. As for the exchanged information, at each iteration each BS broadcasts its $\mu_j$ which is a relatively small real number, and each user reports its association request to only one BS which it wants to connect to. The amount of information to be exchanged in the distributed algorithm is $k(N_B+N_U)$, where $k$ is the number of iterations, while in the centralized method it is proportional to $(N_B\times N_U)$. The gradient method converges fast generally, especially with the dynamic stepsize proposed in Sec. IV-C, and thus $k$ is a small number (less than 20 in simulation). Therefore, even with the requirement of multiple exchanges of  information, distributed algorithms may be superior for some cases, such as large scale problems. It is applicable as long as the convergence of distributed algorithm is faster than the association period. After iteratively performing the above steps, the algorithm is guaranteed to  converge to a near-optimal solution. This is proved in the next subsection.

\subsection{Step Size and Convergence}
Suppose the stepsize dynamically updates according to the rule
\begin{equation}\label{eq:stepsize}
\delta(t)=\gamma(t) \frac{D(\mu(t))-D(t)}{\| \partial D(\mu(t)) \|^2}, \quad 0<\underline{\gamma}\leq\gamma(t)\leq\bar{\gamma}<2,
\end{equation} 
where $D(t)$ is an estimate of the optimal value $D^*$ of problem (\ref{dual}), $\underline{\gamma}$ and $\bar{\gamma}$ are some scalars \cite{Ber09}. We consider a procedure for updating $D(t)$, whereby $D(t)$ is given by 
\vspace{-0.3cm}
\begin{equation}\label{eq:D-range}
D(t)=\min\limits_{0\leq\tau\leq t} D(\mu(\tau))-\varepsilon(t),
\end{equation}
and $\varepsilon(t)$ is updated according to
\vspace{-0.2cm}
\begin{equation}\label{eq:epsilonupdate}
\varepsilon(t+1)=
          \left\{
            \begin{array}{l l l}
            & \rho\varepsilon(t),  \quad & \textrm{if } D(\mu(t+1))\leq D(\mu(t)),\\
            & \max\{\beta\varepsilon(t),\varepsilon\}, \quad & \textrm{if } D(\mu(t+1))>D(\mu(t)),
            \end{array}
          \right.
\end{equation}
where $\varepsilon$, $\beta$ and $\rho$ are fixed positive constants with $\beta<1$ and $\rho>1$ \cite{Ber09}.

Thus in this procedure, we want to reach to a target level $D(t)$ that is smaller by $\varepsilon(t)$ over the best value achieved. Whenever the target level is achieved, we increase $\varepsilon(t)$ (i.e., $\rho>1$) or we keep it at the same value (i.e., $\rho=1$). If the target level is not attained at a given iteration, $\varepsilon(t)$ is reduced up to a threshold $\varepsilon$, which guarantees that the stepsize $\delta(t)$ (\ref{eq:stepsize}) is bounded away from zero. As a result, we have the following theorem.
\begin{theo}\label{theo2}
Assume that the stepsize $\delta(t)$ is updated by the dynamic stepsize rule (\ref{eq:stepsize}) with the adjustment procedure (\ref{eq:D-range}) and (\ref{eq:epsilonupdate}). If $D^*>-\infty$ where $D^*$ denotes the optimal value, then
\vspace{-0.35cm}
\begin{equation}
\inf\limits_{t} D(\mu(t))\leq D^*+\varepsilon.
\end{equation}
\end{theo}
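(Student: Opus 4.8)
The plan is to recognize the iteration (\ref{muadjust}) as a subgradient method for \emph{minimizing} the convex dual $D(\mu)=f(\mu)+g(\mu)$ from (\ref{dual}). The update direction $K_j(t)-\sum_i x_{ij}(t)$ is precisely a component of a subgradient $g(t)\in\partial D(\mu(t))$, being the partial derivative of the relaxed Lagrangian with respect to $\mu_j$ evaluated at the inner maximizers of (\ref{sub1}) and (\ref{sub2}). With this identification I would first establish the standard Fej\'er-type inequality: for any fixed $\mu$,
\[
\|\mu(t+1)-\mu\|^2\le\|\mu(t)-\mu\|^2-2\delta(t)\bigl(D(\mu(t))-D(\mu)\bigr)+\delta(t)^2\|g(t)\|^2,
\]
which follows by expanding $\mu(t+1)-\mu=\mu(t)-\mu-\delta(t)g(t)$ and applying the subgradient inequality $g(t)^\top(\mu(t)-\mu)\ge D(\mu(t))-D(\mu)$. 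Substituting the rule (\ref{eq:stepsize}) and using that (\ref{eq:D-range}) gives $D(\mu(t))-D(t)\ge\varepsilon(t)>0$, the last two terms collapse to $-\delta(t)\bigl[2(D(\mu(t))-D(\mu))-\gamma(t)(D(\mu(t))-D(t))\bigr]$, a form I can control.

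Before the main argument I would record two facts. First, a \emph{uniform} subgradient bound: since the redundant constraint $K_j\le N_U$ in (\ref{primal}) forces $0\le K_j(t)\le N_U$ (without it $K_j(t)=e^{\mu_j(t)-1}$ could diverge), and $0\le\sum_i x_{ij}(t)\le N_U$ because $x_{ij}\in[0,1]$ over $N_U$ users, each component of $g(t)$ lies in $[-N_U,N_U]$, so $\|g(t)\|^2\le N_B N_U^2=:C^2$. Second, the rule (\ref{eq:epsilonupdate}) keeps $\varepsilon(t)\ge\varepsilon$ for all $t$. I would then argue by contradiction, assuming $\inf_t D(\mu(t))>D^*+\varepsilon$, and take $\mu^\ast$ to be a dual minimizer (which exists by the strong duality/Slater argument given after (\ref{dual})), so $D(\mu^\ast)=D^\ast$.

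The core is a dichotomy on how often the target level $D(t)$ is attained. If the target is met infinitely often, then each such event forces the running best value $\min_{\tau\le t}D(\mu(\tau))$ to drop by at least $\varepsilon(t)\ge\varepsilon$ via (\ref{eq:D-range}); infinitely many such drops send $\min_{\tau}D(\mu(\tau))\to-\infty$, contradicting $D(\mu(\tau))\ge D^\ast>-\infty$. If instead the target is met only finitely often, then beyond some $\bar t$ the running minimum freezes at a value $\hat D>D^\ast+\varepsilon$ and $\varepsilon(t)$ is driven to and held at the floor $\varepsilon$, so $D(t)=\hat D-\varepsilon$. Taking $\mu=\mu^\ast$ in the Fej\'er inequality and writing $D(\mu(t))-D(t)=D(\mu(t))-\hat D+\varepsilon\ge\varepsilon$ and $D(\mu(t))-D^\ast=(D(\mu(t))-\hat D)+(\hat D-D^\ast)$, the bracket becomes $(2-\gamma(t))(D(\mu(t))-\hat D)+2(\hat D-D^\ast)-\gamma(t)\varepsilon$, which with $\gamma(t)\le\bar\gamma<2$ and $\hat D-D^\ast>\varepsilon$ is bounded below by $(2-\bar\gamma)\varepsilon$; combined with $\delta(t)\ge\underline\gamma\varepsilon/C^2$ this yields a fixed per-iteration decrease $\|\mu(t+1)-\mu^\ast\|^2\le\|\mu(t)-\mu^\ast\|^2-\underline\gamma(2-\bar\gamma)\varepsilon^2/C^2$. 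Summing over $t\ge\bar t$ drives $\|\mu(t)-\mu^\ast\|^2$ negative, a contradiction. Since both branches fail, $\inf_t D(\mu(t))\le D^\ast+\varepsilon$.

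I expect the main obstacle to be the uniform subgradient bound of the second paragraph: this is the sole purpose of the artificial constraint $K_j\le N_U$, and it is exactly what keeps $\delta(t)$ bounded away from zero so that the per-iteration decrease in the finite-attainment case is a strictly positive constant rather than a vanishing quantity. The accompanying bookkeeping that keeps the bracketed term positive and bounded below—through the interplay of $\bar\gamma<2$, the frozen running minimum $\hat D$, and the floor $\varepsilon$—is the other delicate point. A secondary subtlety is the existence of the minimizer $\mu^\ast$; should $D^\ast$ fail to be attained, I would instead run the finite-attainment argument against a point $\tilde\mu$ with $D(\tilde\mu)$ within an arbitrarily small margin of $D^\ast$.
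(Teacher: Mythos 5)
Your proof is correct, and it rests on the same key lemma as the paper---the uniform boundedness of the dual subgradient $\partial D/\partial\mu_j=K_j(\mu)-\sum_i x_{ij}(\mu)$, which both you and the paper obtain from the redundant constraint $K_j\le N_U$ together with $\sum_i x_{ij}\le N_U$---but it differs in being self-contained. The paper stops after establishing that bound and invokes Proposition 6.3.6 of \cite{Ber09} as a black box; that proposition is exactly the convergence statement for the subgradient method with the path-based target-level stepsize (\ref{eq:stepsize})--(\ref{eq:epsilonupdate}). What you do instead is re-derive that proposition: your Fej\'er inequality plus the dichotomy on whether the target level is attained infinitely or finitely often is precisely the standard proof of the cited result. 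The paper's route buys brevity; yours makes visible where each hypothesis enters ($\bar\gamma<2$ and the floor $\varepsilon$ give the fixed per-iteration decrease, $D^*>-\infty$ kills the infinite-attainment branch, and the subgradient bound---the only problem-specific ingredient, and the sole reason the constraint $K_j\le N_U$ was added---keeps $\delta(t)$ bounded away from zero). Two small repairs are needed, neither a genuine gap: (a) in the finite-attainment branch the running minimum $m(t)=\min_{\tau\le t}D(\mu(\tau))$ does not literally freeze, since non-attainment of the target only rules out drops of size at least $\varepsilon(t)$, not all drops; however, every inequality you use survives with $m(t)$ in place of $\hat D$, because $D(\mu(t))\ge m(t)$ and $m(t)\ge\inf_t D(\mu(t))>D^*+\varepsilon$ under the contradiction hypothesis still give $D(\mu(t))-D(t)\ge\varepsilon$ (hence $\delta(t)\ge\underline{\gamma}\varepsilon/C^2$) and the same lower bound $(2-\bar\gamma)\varepsilon$ on the bracketed term. (b) You read the update (\ref{eq:epsilonupdate}) as conditioned on attaining the target level $D(t)$, which matches the paper's prose description and Bertsekas's rule, even though the paper's displayed formula literally conditions on $D(\mu(t+1))\le D(\mu(t))$; your reading is the one under which the theorem (and the cited proposition) actually holds.
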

\vspace{-0.4cm}

\noindent \begin{proof}
The derivative of function $D(\mu)$ (\ref{dual}) is given by
\begin{equation}\label{eq:partialD}
\frac{\partial D}{\partial \mu_j}(\mu)=K_j(\mu)-\sum\limits_i x_{ij}(\mu).
\end{equation}
In our primal problem, $K_j=\sum_i x_{ij}\leq N_U$ where $N_U$ is the total number of users. According to (\ref{eq:partialD}), when $K_j$ and $\sum_i x_{ij}$ are bounded, the subgradient of dual objective function $\partial D$ is also bounded:
\begin{equation}\label{eq:subgradbound}
\sup\limits_{t} \{ \| \partial D(\mu(t)) \| \} \leq c,
\end{equation}
where $c$ is some scalar. Thus, our problem satisfies the necessary conditions of Proposition~6.3.6 in  \cite{Ber09}. By applying this proposition, the theorem is proved.
\end{proof}

\section{Range Expansion (Biasing)}\label{sec:insights}
The proposed approaches above is sensitive to the deployment of users and BSs, i.e., the algorithms have to run again and again in order to keep tracking of changes in networks. In this section, we investigate a simple approach, called range expansion, which is insensitive to the change of deployments. Range expansion is proposed a practical way to balance loads in HetNets, since it allows for a simple uncoordinated decision based only on the received power from a given BS \cite{Bje11,DamMon11}. It is implemented by assigning a multiplicative SINR bias to each tier of BSs (depending primarily on their transmit power).  For example, if a picocell had a 10 dB SINR bias vs. the macrocell BS, a user would associate with it until the SINR delivered by the macro BS was a full 10 dB higher than the picocell.  This can be performed by measuring the pilot signals from the BSs within radio range and then simply associating with the one that has the highest \emph{biased} received power.  In this section we investigate whether this simple approach is compatible with the optimal load-aware problems formulated in the prior two sections.

There are some recent studies on the SINR bias \cite{JoSan11,SalBul10}, but have not given any theoretical guidance on the ``best'' biasing factors in the sense of load balancing and/or achieving some optimization criteria.  In this section, we evaluate the range expansion that our optimal user association scheme provides in terms of SINR bias. Moreover, the distributed algorithm inspires a \emph{rate} bias scheme where the biasing factor is multiplied with the rate instead of SINR. The best SINR biasing factor is obtained by a brute force search based on the optimal FUA, and the best rate biasing factor is derived directly from the optimal $\mu_j^*$ in the dual distributed algorithm. The network-wide performance with either biasing factor gets pretty close to the optimal FUA, among which rate bias performs better than SINR bias. A more interesting observation is that the biasing factors are \emph{insensitive to the location of BSs and users, which makes the bias schemes simple and robust to implement in practice.}

\subsection{SINR Bias}
We first consider the SINR bias, where users are associated with the BS which provides the highest \textit{biased} SINR.
\begin{defi}
Given the biasing factor $A_j$ for BS $j$, we define the biased SINR received by user $i$ from BS $j$ as
\begin{equation}\label{eq:biassinr}
\textrm{SINR}_{ij}'=A_j\cdot \textrm{SINR}_{ij}= \frac{A_j\cdot P_jg_{ij}}{\sum\limits_{k\in\mathcal{B},k\neq j} P_k g_{ik}+N_0}.
\end{equation}
\end{defi}

We adopt an identical biasing factor for all BSs in the same tier \cite{JoSan11,SalBul10,DamMon11}. Note that setting the biasing factors at all tiers to 1 reduces to the conventional max-SINR cell association, and setting them to~$A_j=\left. 1 \middle/ P_j \right.$ associates users to the BS with the lowest path loss. Biasing under-loaded small BSs, the cells extend the coverage and attract more users, thus resulting in a more fair distribution of traffic. In simulation, the biasing factor is quite stable as the change of BS density and transmit power, and the performance by SINR bias is very close to the optimal performance by FUA, which is further discussed in Section \ref{sec:simulation}.

\subsection{Rate Bias}
According to our load aware association schemes, the best SINR biasing factors are obtained by a brute force search with high complexity. The solution (\ref{xij}) of the dual distributed algorithm motivates the more tractable idea of rate bias. According to (\ref{xij}), user $i$ is associated with BS~$j^*$, where $j^*=\arg\max\limits_j (c_{ij}e^{-\mu_j^*})$ for optimal association. Therefore, by setting the rate biasing factor to $B_j=e^{-\mu_j^*}$, the association would be exactly same as the association obtained by the distributed algorithm, which is a near-optimal solution.

\begin{defi}
We define the biased rate $c_{ij}$ of user $i$ from BS $j$ as
\vspace{-0.4cm}
\begin{equation}\label{eq:biasc}
c'_{ij}=c_{ij}\cdot B_j. \vspace{-0.3cm}
\end{equation}
\end{defi}
\noindent Through range expansion, users are associated with the BS that serves the maximum biased rate~$c'_{ij}$. In rate bias, the biasing factor is in the exponential term of SINR (i.e., $(1+\textrm{SINR}_{ij})^{B_j}$), which is different from SINR bias where the biasing factor is multiplied directly to SINR (i.e., $A_j\textrm{SINR}_{ij}$). 

In the distributed algorithm, the price variable is different from BS to BS, even for those belonging to the same tier. However, in the investigation of range expansion, just as for SINR bias, we use the same biasing factor for all BSs in a given tier, which is the mean of the optimal multiplier, i.e., $B_j=E[e^{-\mu_l^*}]$, where~$l\in j$th tier. The results of rate bias shown in next section is very close to optimal solution in FUA.
\section{Performance Evaluation}\label{sec:simulation}
We consider a three-tier HetNet with transmit power $\{P_1, P_2, P_3\}=\{46,35,20\}$dBm. The theoretical analysis throughout this paper is independent of the spatial distribution of the BSs. For the simulations, we model the locations of the macro BSs to be fixed, and the locations of the small BSs to be uniformly and independently distributed in space. This corresponds to operator deployed macros/picocells, and customer-placed femtocells. We model the location processes across different tiers as independent, with deployed density $\{\lambda_2, \lambda_3\}=\{ 5,20\}$ per macrocell. In modelling the propagation environment, we use a path loss $L(d) = 34 + 40 \log(d)$ and $L(d) = 37 + 30 \log(d)$ for macros/picocells and femtocells respectively. We assume lognormal shadowing with a standard deviation $\sigma_s=8$dB. At room temperature and bandwidth 10MHz, the thermal noise power is $\sigma^2=kTB=-104$dBm. We then assume that during the connection period between user $i$ and BS $j$, the user achieves the Shannon capacity rate, i.e., $c_{ij}=\log_2(1+\textrm{SINR}_{ij})$.

\subsection{Loads among different BSs}
Fig. \ref{com-mean} compares the load situations among different association schemes. The max-SINR association results in very unbalanced loads: the macro BSs are over-loaded, while small BSs serve far fewer users, with some even being idle. In the fractional association scheme, the load is shifted to the less congested small BSs, which suggests that our objective alleviates the asymmetric load problem. The results after rounding are almost the same as the global optimum obtained by fractional association, showing the effectiveness of the rounding scheme. This occurs because there are few users associated with more than one BSs: most users are not ``fractional''. Moreover, the fractional users  usually have a strong preference towards one of the BSs. The dual distributed algorithm and biasing also provide near-optimal load distributions with low complexity.

\begin{figure}[h]\centering
\includegraphics[width=14cm, height=8cm]{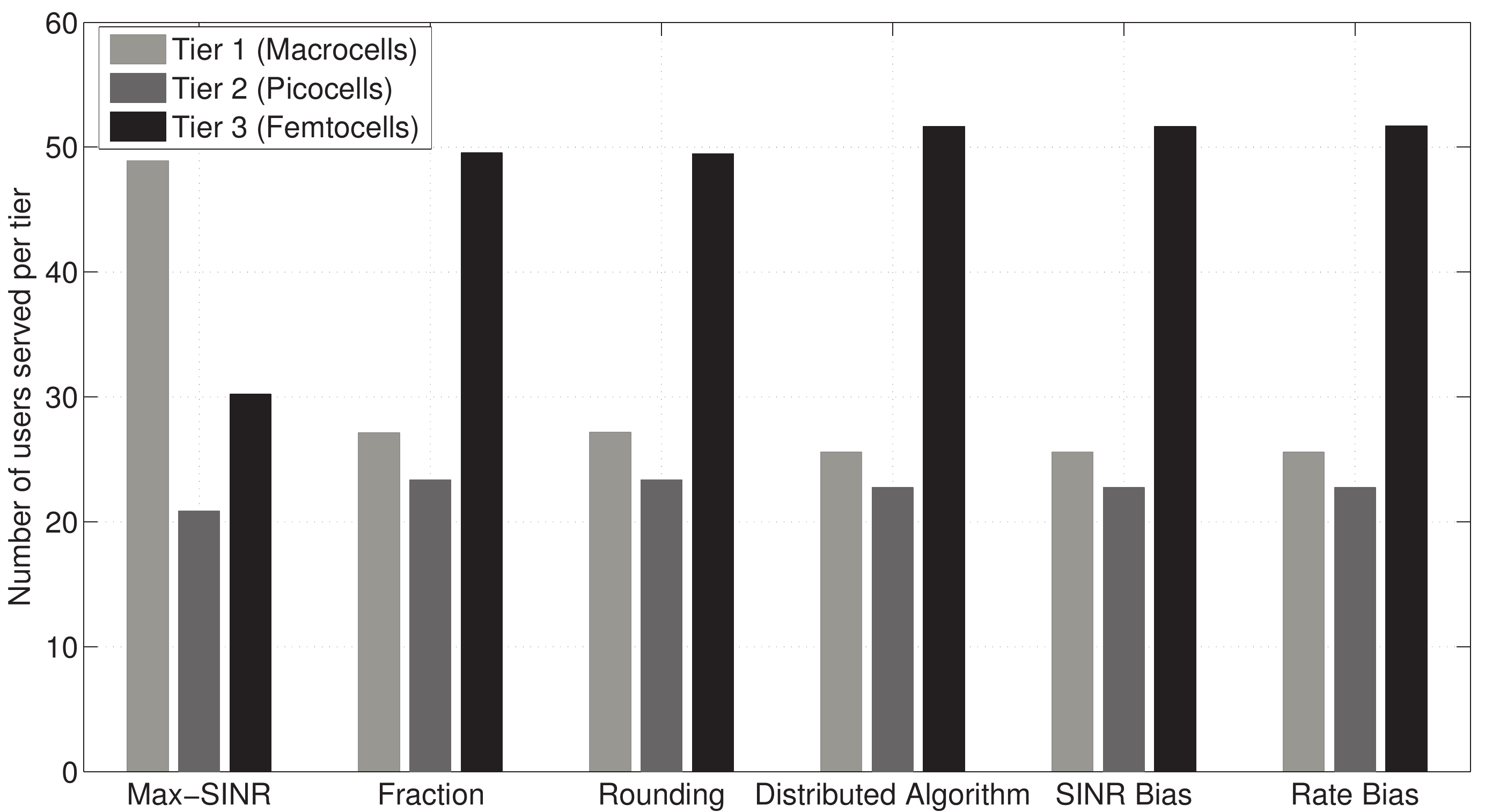}
\small
\caption{Comparisons of average number of users per tier in a three-tier HetNets. Compared to max-SINR, the load is shifted from over-loaded Macro BSs to the less congested small BSs in fractional association. Rounding method and biasing schemes obtain very close results to optimal results in fractional association.}
\label{com-mean}
\end{figure}

\subsection{Rate CDF}
As another performance measure, Fig. \ref{fig:cdf} and \ref{fig:cdfmacro} show the cumulative distribution function (CDF) of long-term rate in HetNets and conventional networks with different association schemes respectively. In HetNets, the CDFs for joint association, fraction-rounding, the dual distributed algorithm and biasing all improve significantly at low rate vs. max-SINR, showing a 2-3.5x gain, in both static setting and stochastic setting.The CDF of max-SINR catches up at a rate of $0.3$ bits/s/Hz, since load balancing provides a more uniform user experience by taking resources from strong users. However, load balancing enables the system to accommodate more users, which will boost the system revenue in general. The CDFs of fractional-rounding and the dual distributed algorithm almost overlap, which verifies that the distributed algorithm converges to a near-optimal solution. The result of joint association is very close to the result of fraction-rounding, which verifies Prop. \ref{prop:equalresource}. Note that in stochastic setting, we adopt PF as the scheduling scheme. The static channel equals the average of stochastic channel. From fig. \ref{fig:cdf}, we can see that the rate in stochastic setting with PF is larger than the rate in static setting, although by PF, the resource allocation will eventually converge to almost equal allocation for each user. This is because the channel distribution would be changed by PF (users are more possible to be served in good channel status, i.e., the cij would be larger than the average rate defined in this paper). Fig. \ref{fig:cdfmacro} shows that the rate gain is unique for HetNets as long as the users are uniformly distributed. 
\begin{figure}[h]\centering
\includegraphics[width=14cm, height=8cm]{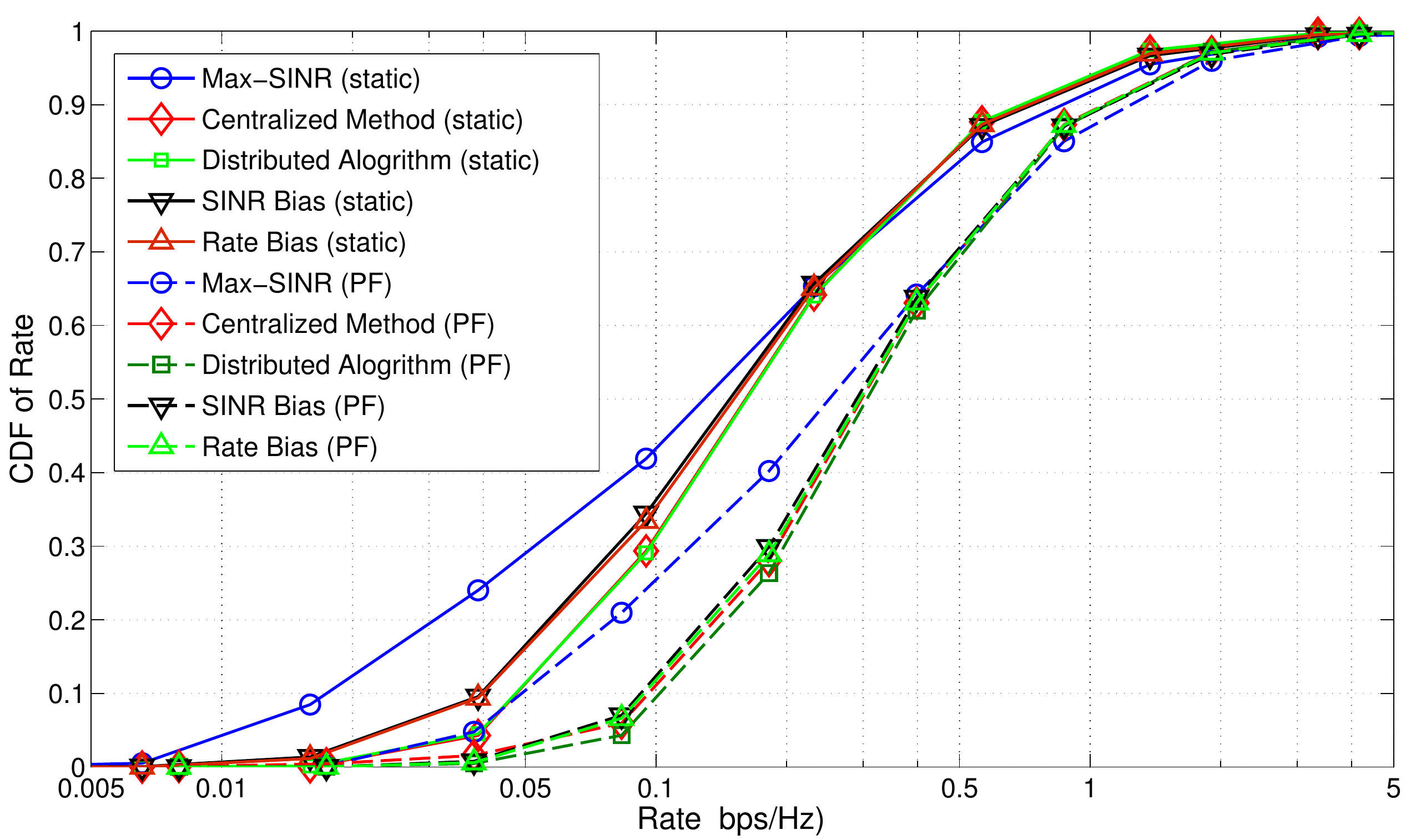}
\small
\caption{The CDFs of overall rate in a three-tier HetNets, in both static setting and stochastic setting. The biasing factors of macro BSs, picos and femtos are $\{A_1, A_2, A_3\}=\{1.00, 4.00, 11.9\}$ in SINR bias, and $\{B_1, B_2, B_3\}=\{1.00, 1.59, 1.88\}$ in rate bias, respectively.}
\label{fig:cdf}
\end{figure}
\begin{figure}[h]\centering
\includegraphics[width=14cm, height=8cm]{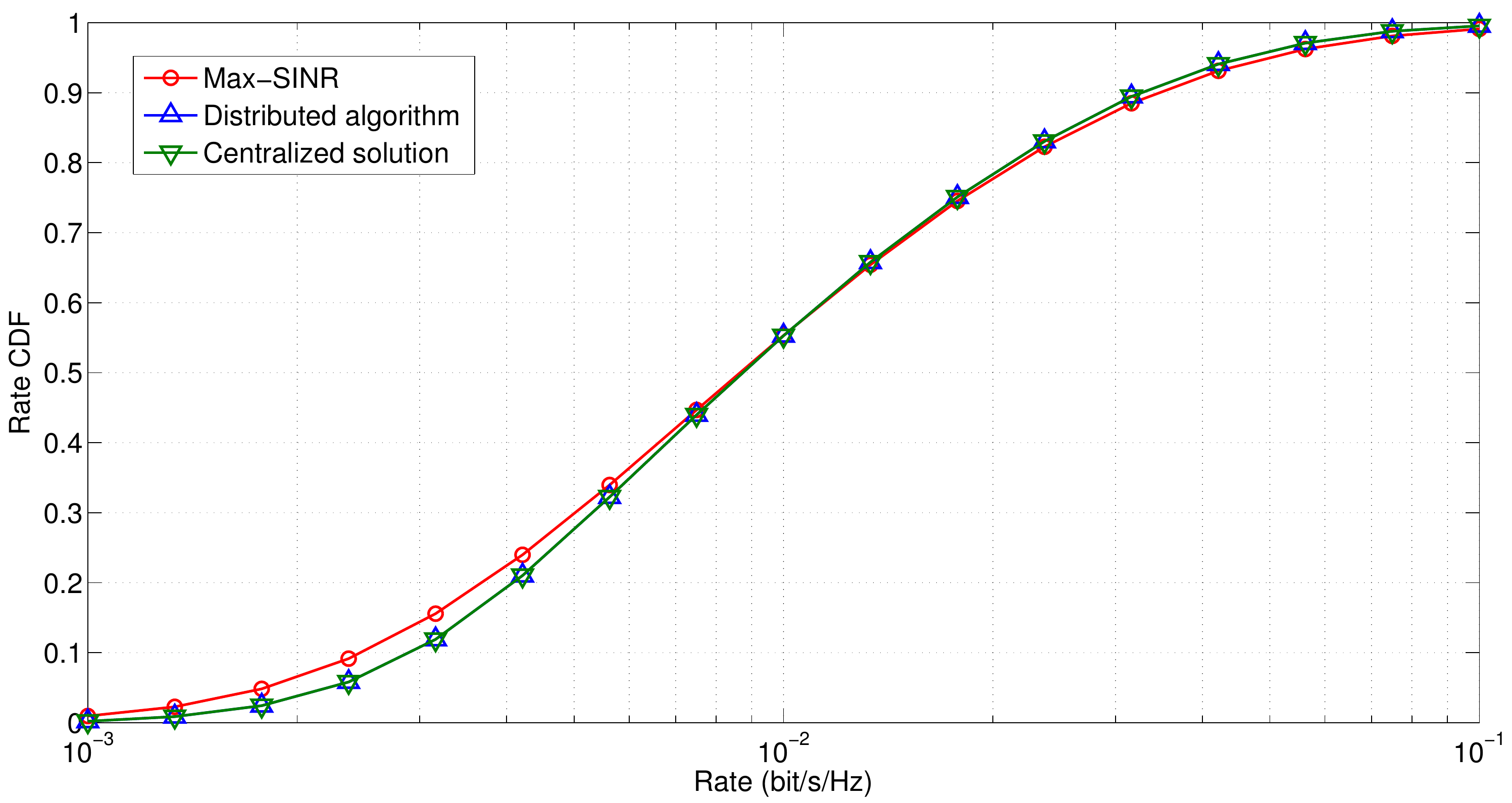}
\small
\caption{The CDFs of overall rate in macro-only networks.}
\label{fig:cdfmacro}
\end{figure}

The ratio of rate $\alpha$ vs. probability $\mathbb{P}(R<\alpha)$ for the various approaches vs. max-SINR is represented in Fig. \ref{fig:rateratio}. The rate gain is quite large (e.g., 3.5x vs. max-SINR at the 10\% rate point). The results for simple biasing are very close to the optimum associations, where the empirically observed biasing factors of macrocells, picocells and femtocells turned out to be $\{A_1, A_2, A_3\}=\{0, 6, 10.8\}$ dB in SINR, and $\{B_1, B_2, B_3\}=\{1.00, 1.59, 1.88\}$ in rate (linear units) respectively, for the chosen parameters.
\begin{figure}[!hbp]
\centering
\includegraphics[width=14cm, height=8cm]{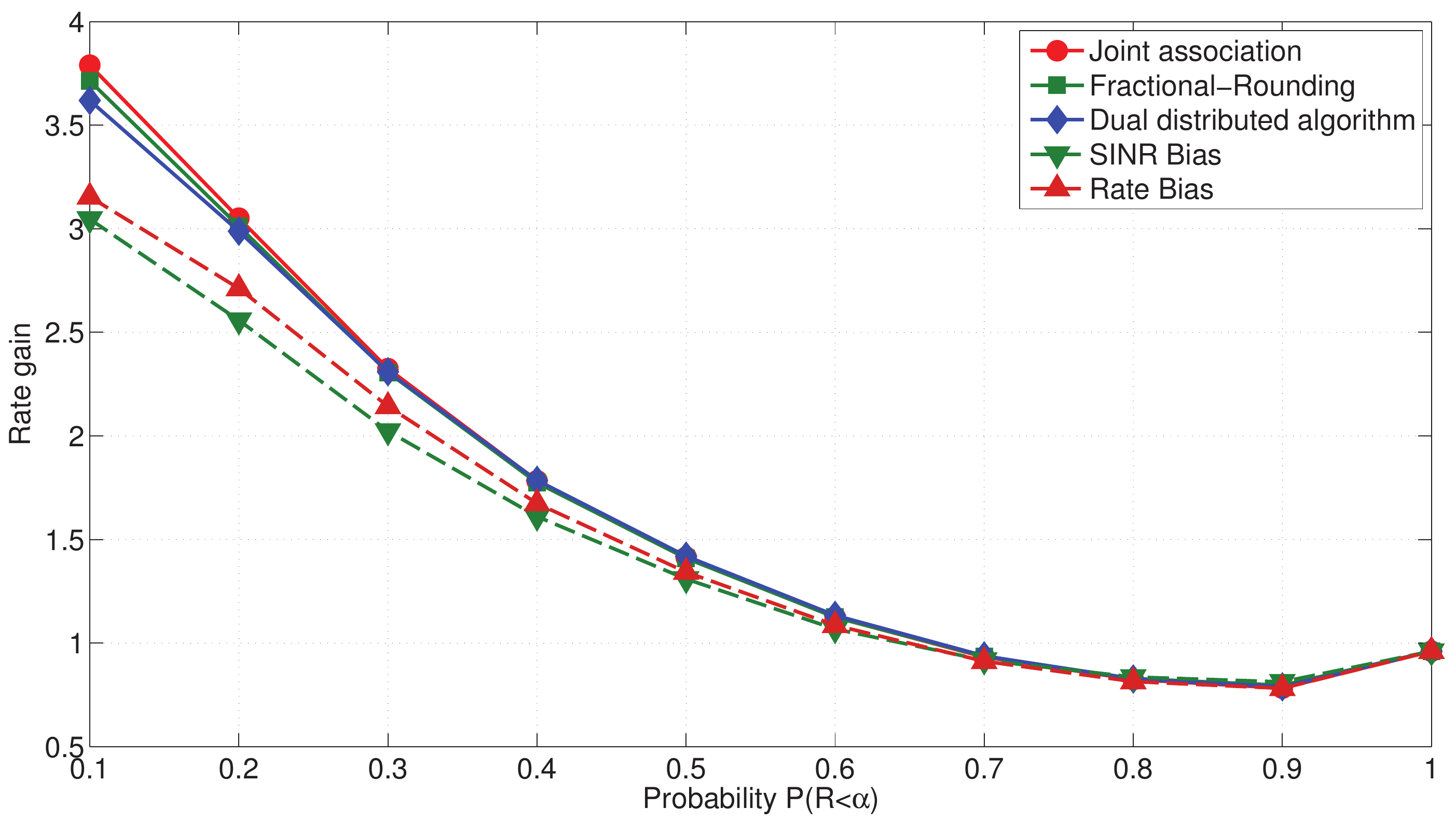}
\caption{Rate gain in a three-tier HetNet. The rate ratio of joint association scheme, fractional-rounding scheme, dual distributed algorithm, and biasing schemes to max-SINR is represented. There is a very large gain for the bottom half of users, i.e., cell edge users.}
\label{fig:rateratio}
\end{figure}
\subsection{Biasing Factor}
The effect of BS density and transmit power on biasing factors is considered in Fig. \ref{fig:bfdensity} and \ref{fig:bfpower} respectively. The biasing factors have been normalized, which means no biasing for the macrocell tier.

When the deployed density of small BSs changes, it is interesting to observe in Fig. \ref{fig:bfdensity2} and \ref{fig:bfdensity3} that deploying more small BSs has very little effect on the biasing factor. Intuitively, though the density of BSs increases, within a reasonable change range of density, there are more users associated with that type of BSs in the optimal association, which makes the needed range expansion almost the same as that in the original scenario. Therefore, \emph{the optimal biasing factors will be almost the same as the network infrastructure deployment evolves}.
\begin{figure}\centering
   \renewcommand{\subcapsize}{\tiny}
   \setcounter{subfigure}{0}
   \setlength{\abovecaptionskip}{0pt}

   \subfigure{\label{fig:bfdensity2}%
   \includegraphics[width=8.5cm, height=6cm]{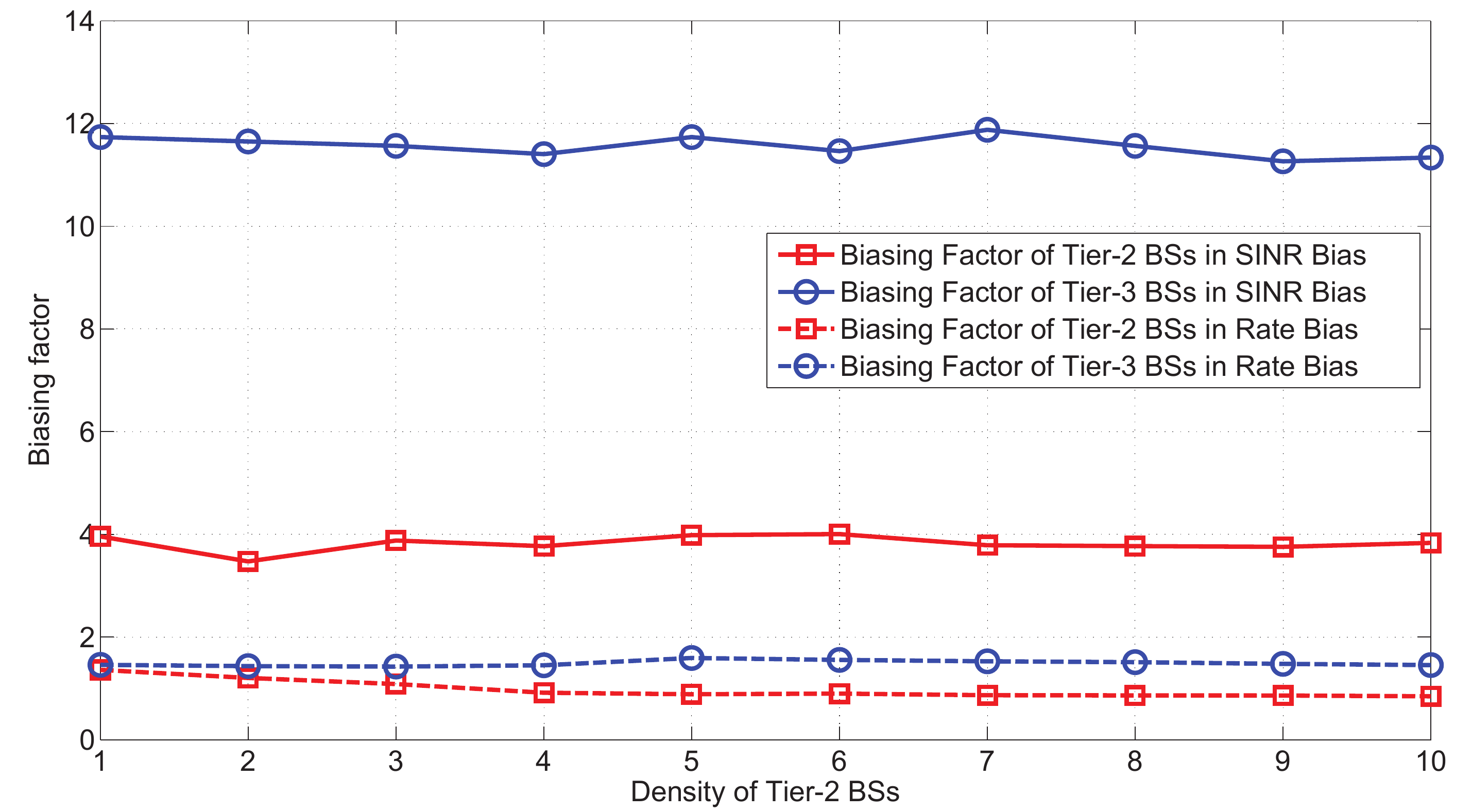}}
   \subfigure{\label{fig:bfdensity3}
   \includegraphics[width=8.5cm, height=6.2cm]{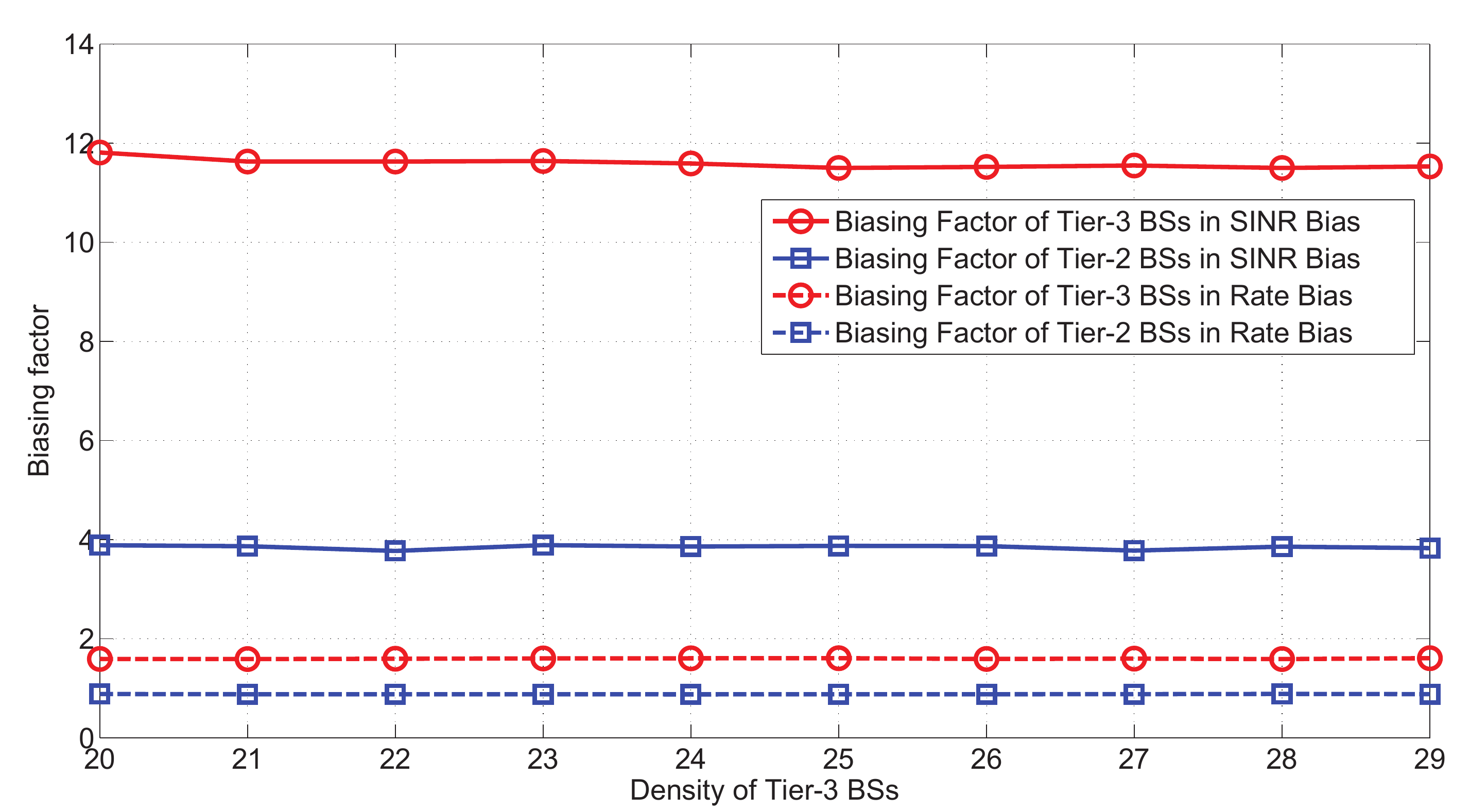}}
\caption{Biasing factors vs. density of small BSs in a three-tier HetNet. The density of one tier changes while the others are fixed, with tier 1 always having biasing factor 1. Deploying more small BSs has almost no effect on biasing.}\label{fig:bfdensity}
\end{figure}

However, the story is quite different when the transmit power changes. As power of 2nd-tier BSs increase in Fig \ref{fig:bfpower2}, the biasing factor of 2nd-tier BSs steadily decreases, while the biasing factor of 3rd-tier BSs almost stays the same. A similar conclusion can be obtained from Fig. \ref{fig:bfpower3}, where the biasing factor of 3rd-tier BSs decreases gradually and the biasing factor of 2nd-tier BSs is almost static. The biasing factor is smaller as the transmit power increases because users are more likely to be associated with these BSs using max-biased SINR even without a strong bias.

\begin{figure}\centering
   \renewcommand{\subcapsize}{\tiny}
   \setcounter{subfigure}{0}
   \setlength{\abovecaptionskip}{0pt}

   \subfigure{\label{fig:bfpower2}%
   \includegraphics[width=8.5cm, height=6cm]{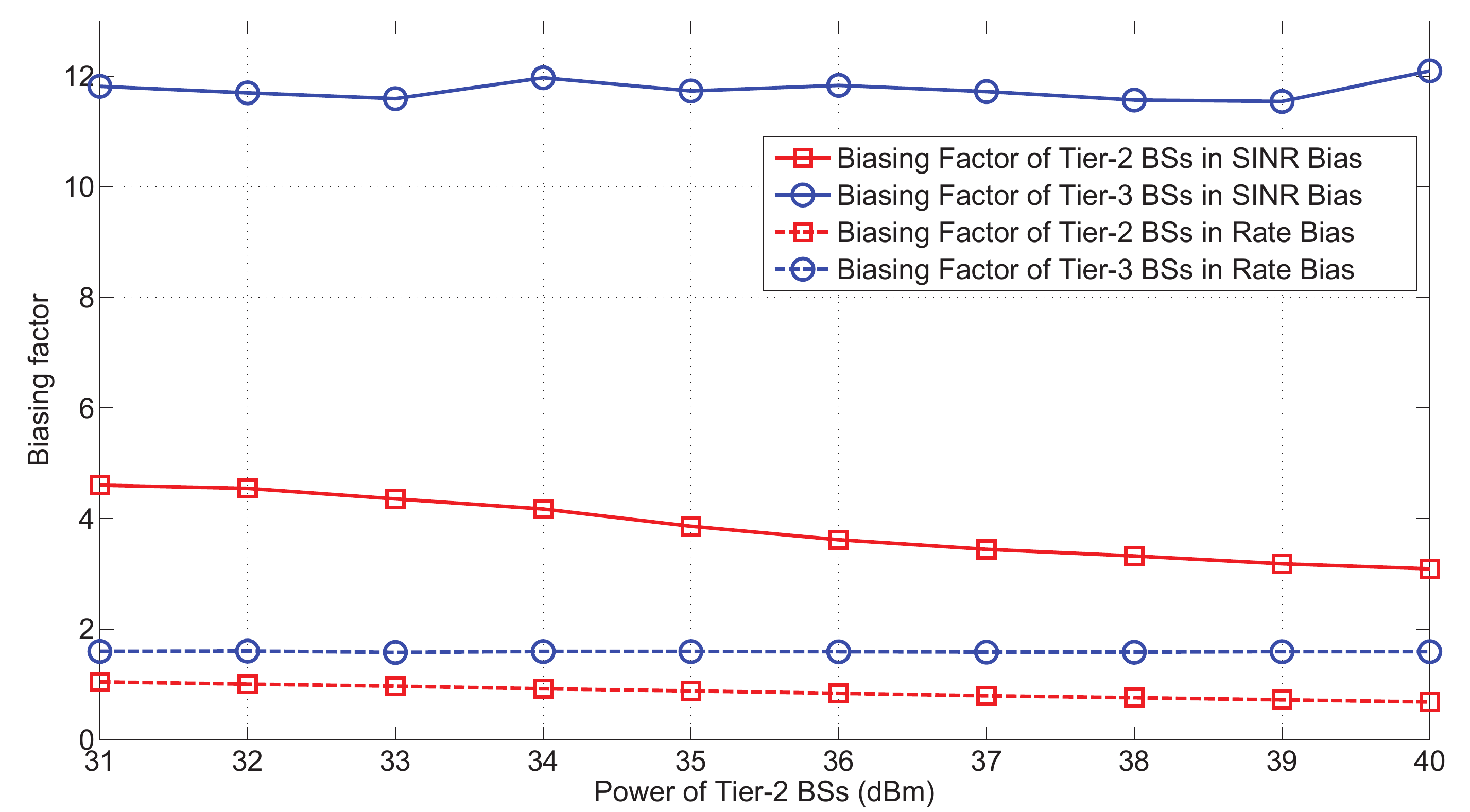}}
   \subfigure{\label{fig:bfpower3}
   \includegraphics[width=8.5cm, height=6.15cm]{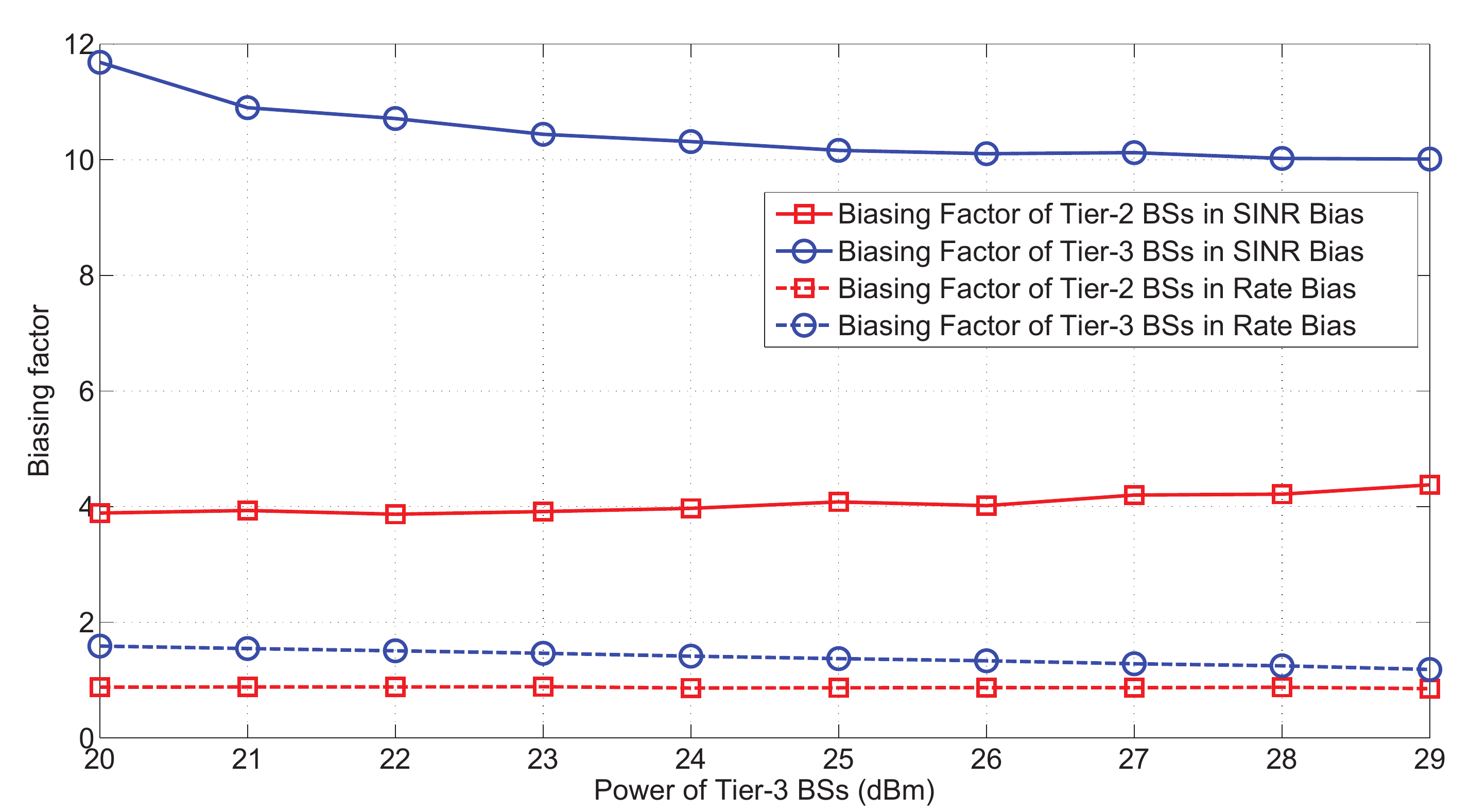}}
\caption{Biasing factors vs. transmit power of BSs in a three-tier HetNet, with tier 1 having biasing factor 1.}\label{fig:bfpower}
\end{figure}

\section{Conclusion}\label{sec:conclusion}
In this paper, we propose a class of novel user association schemes that achieve load balancing in HetNets through a network-wide utility maximization problem. We first consider the cell association and resource allocation jointly, and propose an upper bound on performance. Then we formulate a logarithmic utility maximization problem where the equal resource allocation is optimal, and design a distributed algorithm via dual decomposition, from the relaxation of physical constraints. The distributed algorithm is proved to converge to a near-optimal solution, with low complexity that is linear to the number of users and the number of BSs. Finally, our scheme is extended to the range expansion technique, which requires limited changes to the existing system architecture by introducing biasing factors to small BSs. We consider two types of biasing factors (SINR and rate), and evaluate the effects of BSs' density and transmit power on the biasing factors by using our load-aware association scheme. 

A key observation is that the optimal biasing factors are nearly independent of the BS densities for the various tiers, but highly dependent on the per-tier transmit powers. With these optimal biasing factors, the network nearly achieves the optimal load-aware performance.  The numerical results demonstrate that a load-aware association significantly improves resource utilization and mitigates the congestion of macro BSs, resulting in a multi-fold gain to the overall rate for most users, particularly those with previously low rates. Future work could include load-aware user associations that incorporate dynamic settings (dynamic traffic, high user mobility),  include providing a analytical framework of biasing factors and finding the optimal value theoretically, include the uplink scenario with power control, and the consideration of additional utility functions.


\appendices
\section{Proof of Proposition \ref{prop:equalresource}}\label{proof:equalresource}
The objective function of (\ref{eq:jointop}) is
\vspace{-0.5cm}
\begin{equation}\label{eq:jointop2}
\max_y  \sum_{i\in\{l| x_{lj}=1\}} \log(y_{ij}c_{ij})= \sum_{i\in\{l| x_{lj}=1\}} \log(c_{ij})+\log(y_{ij}),\vspace{-0.3cm}
\end{equation}

where $\sum_i\log(c_{ij})$ is constant relative to $\textrm{SINR}_{ij}$. Then the objective function is equivalent to maximizing the geometric mean:
\vspace{-0.6cm}
\begin{equation}
\max\limits_y  \sum_{i\in\{l| x_{lj}=1\}} \log(y_{ij}) \Leftrightarrow \max\limits_y \frac{1}{N_u} \log\left(\prod\limits_i^{N_u} y_{ij}\right) \Leftrightarrow \max\limits_y\sqrt[n]{y_{1j}y_{2j}\cdots y_{N_uj}},\vspace{-0.3cm}
\end{equation}
where $N_U$ denotes the number of users associated with BS $j$.
As the geometric mean is no greater than the arithmetic mean, we have
$
\sqrt[n]{y_{1j} y_{2j} \cdots y_{N_uj}} \leq \frac{y_{1j}+y_{2j}+\cdots+y_{N_uj}}{N_u},
$
where the equality holds if and only if $y_{1j}=y_{2j}=\cdots=y_{N_uj}$. Therefore, to maximize (\ref{eq:jointop2}), $y_{ij}$ should be equal for all $i$, i.e., $y_{ij}=\left.1\middle/{K_j}\right.$.

\bibliographystyle{ieeetr}
\bibliography{qyebib}


\end{document}